\documentclass[aip,cha,
amsmath,amssymb,amsfonts,
reprint,onecolumn,tightenlines,11pt,nofootinbib,
groupedaddress,
a4paper
]{revtex4-1}
\usepackage{amsthm}
\usepackage{hyperref}
\usepackage[utf8]{inputenc}
\usepackage{bbm,braket}
\theoremstyle{definition}

\newtheorem{definition}{Definition}
\newtheorem{lemma}[definition]{Lemma}
\newtheorem{proposition}[definition]{Proposition}

\providecommand{\abs}[1]{{\lvert{#1}\rvert}}
\providecommand{\norm}[1]{{\lVert{#1}\rVert}}
\providecommand{\openone}{{\mathbbm 1}}
\providecommand{\probability}{{\mathbbm P}}
\providecommand{\reals}{{\mathbbm R}}
\providecommand{\Rho}{{\mathrm P}}
\providecommand{\tsum}{{\textstyle \sum}}

\newcommand{\exbound}{{\partial^+}}
\newcommand{\exv}[1]{{\langle{#1}\rangle}}
\newcommand{\proj}[1]{{\ket{#1}\!\!\bra{#1}}}
\newcommand{\then}{{\rhd}}
\newcommand{\vtr}[1]{{\mathbf{#1}}}

\DeclareMathOperator{\cone}{cone}
\DeclareMathOperator{\diag}{diag}
\DeclareMathOperator{\id}{id}
\DeclareMathOperator{\tr}{tr}

\begin{document}
\title{
Sequences of projective measurements in generalized probabilistic models
}

\author{Matthias Kleinmann}
\affiliation{%
Naturwissenschaftlich-Technische Fakultät,
Universität Siegen,
Walter-Flex-Straße 3,
57068 Siegen,
Germany}%
\affiliation{%
Departamento de Matemática,
Universidade Federal de Minas Gerais,
Caixa Postal 702,
Belo Horizonte,
Minas Gerais 31270-901,
Brazil}

\begin{abstract}
We define a simple rule that allows to describe sequences of projective 
 measurements for a broad class of generalized probabilistic models.
This class embraces quantum mechanics and classical probability theory, but, 
 for example, also the hypothetical Popescu-Rohrlich box.
For quantum mechanics, the definition yields the established Lüders's rule, 
 which is the standard rule how to update the quantum state after a 
 measurement.
In the general case it can be seen as the least disturbing or most coherent way 
 to perform sequential measurements.
As example we show that Spekkens's toy model \cite{Spekkens:2007PRA} is an 
 instance of our definition.
We also demonstrate the possibility of strong post-quantum correlations as well 
 as the existence of triple-slit correlations for certain non-quantum toy 
 models.
\end{abstract}

\maketitle

\section{Introduction}
It is a fundamental property of quantum mechanics that any nontrivial 
 measurement disturbs the system it acts on.
This disturbance is responsible for very particular phenomena like the quantum 
 Zeno effect \cite{Peres:1995, Itano:1990PRA}, where the time-evolution of a 
 system is frozen due to repeated measurements, or the contextual behavior of a 
 quantum system \cite{Kirchmair:2009NAT}, where measurement outcomes depend on 
 the choice of previous compatible measurements.
Compared to the classical world, where a measurement---at least in 
 principle---may leave the system unchanged, this quantum property seems to be 
 very particular and at the same time very fundamental.

The most common formulation of the this disturbance is due to Lüders 
 \cite{Luders:1951APL, Luders:2006AP} and determines how the state of a system 
 changes after a measurement: $\rho\mapsto \Pi\rho \Pi / \tr(\rho \Pi)$.
But this is only one out of many possible state changes that may occur in an 
 experiment.
In the most general case the post-measurement state can be seen as the result 
 of a coherent evolution involving an auxiliary system and a destructive 
 measurement on that auxiliary system.
This fundamental result by Ozawa \cite{Ozawa:1984JMP, Heinosaari:2012} does, 
 however, not explain the special role of Lüders's rule.
Conversely, Ozawa's result gives a very particular model of a measurement and 
 one might argue that giving up Lüders's rule as a fundamental entity might 
 actually make too strong assumptions on the peculiarities of the measurement 
 process in quantum mechanics.

In this work we provide a very small set of assumptions that uniquely singles 
 out Lüders's rule within quantum mechanics on the one hand, and on the other 
 hand has many desirable properties when applied to hypothetical non-quantum 
 models.
These two aspects have been discussed for a long time \cite{Mielnik:1969CMP, 
 Araki:1980CMP, Niestegge:2008FPH, Ududec:2011FPH}, and some consensus seems to 
 exist that the mathematical concept of a filter is an appropriate approach.
We advertise that the axioms that we suggest here are significantly simpler 
 then those that have appeared before while at the same time they imply more 
 favorable physical properties.

We proceed as follows.
The introduction is completed by a detailed reminder on how post-measurement 
 states are treated in quantum mechanics, cf.\ Sec.~\ref{s1003}, and a summary 
 of the mathematical framework of ordered vector spaces in Sec.~\ref{s18520}, 
 enriched with examples in Sec.~\ref{s9680}.
In Sec.~\ref{s8520} we introduce the notion of projective, neutral, and 
 coherent $f$-compatible maps, the latter of which we propose as a generalized 
 definition of Lüders's rule.
We investigate fundamental properties of this definition and give examples, in 
 particular we study the case of quantum mechanics in Sec.~\ref{s15325}, a 
 large class of toy models in Sec.~\ref{s16520}, and the $n$-slit experiment in 
 Sec.~\ref{s26365}.
We conclude with a discussion of our findings in Sec.~\ref{s12113}.

\subsection{Quantum instruments}\label{s1003}
Before we start to formulate the behavior of measurement sequences in 
 generalized probabilistic models, let us first recall the established 
 formalism in quantum mechanics \cite{Heinosaari:2012}.

We consider a situation where first an observable $A$ and then an observable 
 $B$ is measured.
(In order to simplify the discussion, we assume that $A$ and $B$ have pure 
 point spectrum.)
The system subject to the measurements is initially described by a density 
 operator $\rho$ and the measurement of $A$ is assumed to have yielded the 
 result $a$.
With the spectral decomposition as $A= \sum_a a \Pi_a$, according to Lüders 
 \cite{Luders:1951APL, Luders:2006AP}, the expected value of $B$ is given by
\begin{equation}\label{e5921}
 \exv{B|A= a}_\rho= \tr[\Pi_a\rho \Pi_a B] / \tr(\rho \Pi_a)=
 \tr[\rho \phi_a(B)] / \tr[\rho \phi_a(\openone)].
\end{equation}
For the second equality we introduced the map $\phi_a\colon X\mapsto \Pi_a X 
 \Pi_a$, so that it becomes manifest that the conditioned expectation value on 
 the l.h.s.\ arises directly from the laws of conditional probabilities and the 
 quantum instrument
 $\mathcal I_\mathrm{L}\colon a\mapsto \phi_a$.
(In literature, the notion of a Lüders instrument has been established, but it 
 covers a broader set of instruments then those that follow Lüders's rule.)

The situation described in Eq.~\eqref{e5921} can be further formalized.
With the spectral decomposition $B=\sum b \Rho_b$, the probability to get 
firstly the outcome $a$ and then the outcome $b$ is
\begin{equation}
 \probability_\omega(\Pi_a\then \Rho_b)=
   \omega[\phi_a(\Rho_b)],
\end{equation}
 where $\omega\colon X\mapsto \tr(\rho X)$ is a way to write the quantum state 
 and $\Pi_a\then \Rho_b$ is the event ``$\Pi_a$ then $\Rho_b$.''

Depending on the experimental implementation, the actual instrument $\mathcal 
 I'$ will deviate from the instrument that has been described by Lüders.
But there is confidence that $\mathcal I_\mathrm{L}$ can be approximated to an 
 arbitrary precision, since on a formal level \cite{Ozawa:1984JMP} one can 
 implement $\mathcal I_\mathrm{L}$ by virtue of an ancilla system in a pure 
 state, an entangling unitary between the probe and ancilla system, and a 
 destructive measurement solely on the ancilla system.
This shows that $\mathcal I_\mathrm{L}$ can be implemented as an immediate 
 consequence of
\begin{itemize}\itemsep-.1em
\item[(i)]
independent pure state preparation $\rho\mapsto \rho\otimes \proj \psi$,
\item[(ii)]
unitary evolution,
\item[(iii)]
Born's rule, $\probability_\omega(A=a)= \omega(\Pi_a)$.
\end{itemize}

However, any instrument can be implemented with the ingredients (i)--(iii).
The question that drives our subsequent analysis is which of the properties of 
 the instrument $\mathcal I_\mathrm{L}$ corresponding to Lüders's rule are most 
 characteristic.
Within the framework of quantum mechanics there would be a variety of possible 
 characteristics that single out Lüders's rule and without comparing to other 
 possibilities, it would be difficult to argue in favor of one or another.
Our approach is to broaden the mathematical concepts, so that not only quantum 
 mechanics can be described but also a wider set of generalized probabilistic 
 models is covered.

\subsection{Positivity and generalized probabilistic models}\label{s18520}
Quantum events as well as classical events can be mathematical described by 
 ordered vector spaces.
This is based on the observation that the main characteristics of either theory 
 is dominated by the notion of positivity.
In particular in quantum mechanics, the (mixed) states are given by maps 
 $\omega\colon X\mapsto \tr(\rho X)$ which obey $\omega(\openone)= 1$ and 
 $\omega(F)\ge 0$ for all positive semi-definite operators $F$.
Conversely, a generalized measurement in quantum mechanics is a family of 
 positive semi-definite operators $(F_a)$ with $\sum_a F_a= \openone$.
The operators $F_a$ are then called effects.
This positivity structure is largely motivated from the probabilistic 
 interpretation $\probability_\omega(F_a)= \omega(F_a)$.
The class of models which follows a similar interpretation is captured by the 
 mathematical concept of an ordered vectors space.
In turn, the set of models that can be fitted into this mathematical concept 
 contains instances that are in conflict with the predictions of quantum 
 mechanics \cite{Popescu:1994FPH, Janotta:2012EPTS}.
For this reason, these models are called generalized probabilistic models.

We now discuss the mathematical concepts related to ordered vectors spaces 
 while in Sec.~\ref{s9680} we present explicit examples.
For a more verbose introduction into the mathematical concepts we particularly 
 recommend the introduction of Ref.~\onlinecite{Paulsen:2009IUMJ} and the books 
 by Alfsen [\onlinecite{Alfsen:1971}] and Paulsen [\onlinecite{Paulsen:2002}].
A real order unit vector space is a triple $(V,V^+,e)$, such that
\begin{itemize}\itemsep-.1em
\item[(i)]
$V$ is a real vector space (not necessarily finite-dimensional).
\item[(ii)]
$V^+\subset V$ is a cone, i.e., $V^+ + V^+= V^+= \reals^+ V^+$ and $V^+\cap 
 -V^+= \set 0$.
\item[(iii)]
$e\in V^+$ is an order unit, i.e., for any $x\in V$ there is an $r\in \reals^+$ 
 such that $re+ x\in V^+$.
\end{itemize}
We wrote $\reals^+$ for the set of non-negative reals.
It follows \cite{Paulsen:2009IUMJ} that $V^+-V^+= V$.
For two elements $x,y\in V$ the condition $x-y\in V^+$ defines a partial order 
 and one writes $x\ge y$.

The order unit $e$ is Archimedean provided that for any $x\in V$ the property 
 $x+\reals^+ e\subset V^+ \cup \set x$ implies $x\in V^+$.
This property in some sense requires that $V^+$ is ``closed.''
While we use this property merely for technical reasons, also note, that an 
 order unit vector space can always be modified in such a way that it has an 
 Archimedean order unit.
This Archimedeanization \cite{Paulsen:2009IUMJ} works by constructing the 
 ``closure'' of the cone and identifying operationally indistinguishable 
 elements.
These operations are physically benign and hence we only consider Archimedean 
 order unit vector (AOU) spaces.

We continue to fix notation.
Within the dual space $V^*= \set{\alpha\colon V\rightarrow \reals | \alpha 
 \text{ is linear}}$ the set
\begin{equation}
 \mathcal S(V,V^+,e)= \Set{\omega\in V^* | \omega(e)= 1 \text{ and } 
 \omega(V^+)\subset \reals^+ }
\end{equation}
 is the convex set of states and the definition
\begin{equation}
 \norm{x}= \inf \Set{r\in \reals^+| -re \le x \le re}
\end{equation}
 provides the order norm of $x\in V$.
It is convenient to define the set of effects, i.e., the convex set of positive 
 elements bounded by $e$,
\begin{equation}
 V_e^+= V^+\cap (e-V^+),
\end{equation}
 and to write for the normalized representatives of the extremal rays of $V^+$ 
 the symbol
\begin{equation}
 \exbound V^+
  =\Set{ f\in V^+ | \norm f= 1 \text{ and } ( 0\le g\le f \text{ implies }
  g\in \reals^+ f) }
\end{equation}
We occasionally construct $V^+$ from a finite set $\mathcal A\subset V$ of 
 extremal rays via
\begin{equation}
 \cone\mathcal A=
 \Set{ x\in V | x= \tsum_{a\in \mathcal A} r_a\, a \text{, where all }
 r_a\in \reals^+}.
\end{equation}

For two AOU spaces $(V,V^+,e)$ and $(W,W^+,e')$, a linear map $\phi\colon 
 V\rightarrow W$ is positive, provided that it maps positive elements to 
 positive elements, $\phi(V^+)\subset W^+$.
(When we let $\phi$ be a map, we always imply that $\phi$ is linear.)
If $\phi(e)= e'$ then $\phi$ is unital.
The spaces are order isomorphic, if there exists a positive unital bijection 
 $\psi\colon V\rightarrow W$ such that its inverse is also positive.

\begin{proposition}\label{p10788}
We recall three results from Ref.~\onlinecite{Paulsen:2009IUMJ}.
\begin{itemize}\itemsep-.1em
\item[(i)]
$f\in V^+$ if and only if $\omega(f)\ge 0$ for all $\omega\in \mathcal S$.
\item[(ii)]
If $f\in V^+$, then there exists a state $\omega\in \mathcal S$, such that 
 $\omega(f)= \norm f$.
\item[(iii)]
For $x\in V$ we have $-\norm x e\le x \le \norm x e$.
\end{itemize}
\end{proposition}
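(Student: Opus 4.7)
\emph{Proof plan.} The three statements follow from the order-unit and Archimedean axioms together with a Hahn--Banach argument that I would apply twice. Part (iii) is essentially the definition of $\norm x$ sharpened by the Archimedean property: for any $\epsilon > 0$ the infimum yields an $r \le \norm x + \epsilon$ with $re \pm x \in V^+$, so $\norm x e \pm x + \epsilon e \in V^+$ for every $\epsilon > 0$. The Archimedean axiom then forces $\norm x e \pm x \in V^+$.

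For (i), the forward direction is immediate from the definition of a state. For the converse, I would argue contrapositively: assume $f \notin V^+$ and use Hahn--Banach separation (justified because the Archimedean property makes $V^+$ closed in the order-norm topology) to produce a nonzero $\lambda \in V^*$ with $\lambda(V^+) \subset \reals^+$ and $\lambda(f) < 0$. A short check using the order-unit property of $e$---if $\lambda(e) = 0$, then positivity of $\lambda$ together with $\pm x \le re$ for large $r$ forces $\lambda \equiv 0$---shows $\lambda(e) > 0$; rescaling gives a state $\omega = \lambda/\lambda(e)$ with $\omega(f) < 0$, contradicting the hypothesis.

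For (ii), I would work on the subspace $W = \mathrm{span}\{e, f\}$ and define $\omega_0(ae + bf) = a + b \norm f$. The crux is verifying that $\omega_0$ is positive on $W \cap V^+$: for $b > 0$, the condition $ae + bf \ge 0$ reads $f \ge -(a/b) e$, which combined with $f \le \norm f e$ from (iii) gives $-a/b \le \norm f$, i.e.\ $a + b \norm f \ge 0$; the case $b < 0$ is handled analogously, using that for $f \in V^+$ one has $\norm f = \inf \Set{r \ge 0 | f \le re}$, and $b = 0$ is trivial. Positivity of $\omega_0$ on $W \cap V^+$ with $\omega_0(e) = 1$ gives $\abs{\omega_0(w)} \le \norm w$ on $W$, so Hahn--Banach extends $\omega_0$ to $\omega \in V^*$ with $\omega(e) = 1$ and $\abs{\omega(y)} \le \norm y$ throughout $V$. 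Finally, applying (iii) to $\norm y e - y$ for $y \in V^+$ gives $\abs{\norm y - \omega(y)} \le \norm y$, hence $\omega(y) \ge 0$; so $\omega \in \mathcal S$, and by construction $\omega(f) = \norm f$.

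The main obstacle in this plan is the positivity verification for $\omega_0$ on $W \cap V^+$ in (ii): the geometric content of $\norm{\cdot}$ as an infimum has to be leveraged carefully, whereas everywhere else the argument reduces to combining the Archimedean property with a standard Hahn--Banach theorem.
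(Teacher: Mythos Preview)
Your argument is correct. The paper itself does not prove this proposition; it simply quotes the three statements from Ref.~\onlinecite{Paulsen:2009IUMJ} without argument, so there is nothing to compare against on the paper's side.

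A few remarks on your write-up. In part (i), the separation step is fine because $e$ lies in the order-norm interior of $V^+$ (if $\norm{e-x}<1$ then $x\ge (1-r)e\ge 0$ for some $r<1$), so the geometric Hahn--Banach theorem applies directly. In part (ii), your handling of the case $b>0$ is a bit compressed: from $f\ge -(a/b)e$ and $f\le \norm{f}e$ you get $(\norm{f}+a/b)e\ge 0$, and the conclusion $\norm{f}+a/b\ge 0$ then uses that $e\ne 0$ together with $V^+\cap(-V^+)=\{0\}$; you might make that step explicit. The final positivity check for the extended $\omega$ is clean: for $y\in V^+$ one has $0\le \norm{y}e-y\le \norm{y}e$, whence $\norm{\norm{y}e-y}\le \norm{y}$ and thus $\abs{\norm{y}-\omega(y)}\le \norm{y}$, giving $\omega(y)\ge 0$ as you claim.
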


~

In principle one is free to choose the AOU space $(V,V^+,e)$ or the states 
 $\mathcal S\subset U$ with some embedding vector space $U$ as fundamental 
 object.
If $\mathcal S$ is fundamental, then \cite{Araki:1980CMP} we can define $V$ to 
 be the space of affine functions on $U$, let $V^+= \set{\xi \in V | 
 \xi(\mathcal S)\subset \reals^+}$, and choose $e$ with $e(\mathcal S)= \set 
 1$.
Since we do not want to make any particular point out of which space is 
 fundamental, we may assume that $V$ is reflexive, $V= V^{**}$.
By virtue of Proposition~\ref{p10788}~(i) this would imply that $(V,V^+,e)$ and 
 $[V^{**}, (V^{**})^+, e^{**}]$ are order isomorphic.

\subsection{Examples of ordered vectors spaces}\label{s9680}
The reason why AOU spaces are considered to be a good framework to describe 
 generalized probabilistic models is that classical events and quantum events 
 can be described by means of AOU spaces \cite{Ludwig:1987, Mittelstaedt:1998}.
For a recent introduction into the physical interpretation we refer to 
 Ref.~\onlinecite{Barnum:2011ENTS}.

\vspace{.5em}\noindent\textbf{Classical events.}
A set of discrete classical events---e.g.\ the outcomes when rolling a 
 dice---defines a so-called AOU lattice.
It is the $n$-fold Cartesian product of $(\reals,\reals^+,1)$, where $n$ is the 
 number of outcomes.
The set of states is given by the maps $\vtr v\mapsto \vtr p\cdot \vtr v$ with 
 $\vtr p_k\ge 0$ for all $k$, and $\sum_k \vtr p_k= 1$.
The order norm reads $\norm{\vtr v}= \max_k \abs{\vtr v_k}$, turning $V$ into 
 the Banach space $\ell_n^\infty$.

\vspace{.5em}\noindent\textbf{Quantum events.}
For quantum mechanics, we choose the bounded self-adjoint operators as vector 
 space $V$ and we identify $V^+$ to be the set of positive semi-definite 
 operators.
With the choice $e= \openone$ this forms an AOU space, cf.\ Theorem~{1.95} in 
 Ref.~\onlinecite{Alfsen:2001}.
The set of quantum effects is $V^+_e$.
The quantum states can be represented by the maps $X\mapsto \tr(\rho X)$ where 
 $\rho$ is positive semi-definite with $\tr\rho= 1$.
(For infinite-dimensional Hilbert spaces, however, not all functionals in 
 $\mathcal S$ can be written this way.)
The order norm $\norm X$ yields the operator norm of $X$ and the extremal set 
 $\exbound V^+$ is exactly the set of rank-one projections.

\vspace{.5em}\noindent\textbf{Dichotomic norm cones.}
A simple class of examples is constructed as $V= \reals\times \reals^d$, $V^+= 
 \set{(t, \vtr x)| t\ge \norm{\vtr x}}$, and $e= (1, \vtr 0)$, where 
 $\norm{\vtr x}$ is a norm in $\reals^d$.
Such cones only allow dichotomic observables in the sense that $e-\exbound V^+= 
 \exbound V^+$.
However several interesting cases are instances of this example:
the event space of tossing a coin (classical bit, $d= 1$ and $\norm{\vtr x}= 
 \abs{\vtr x_1}$),
the local part of a Popescu-Rohrlich box \cite{Popescu:1994FPH} (generalized 
 bit \cite{Barrett:2007PRA}, $d= 2$ and $\norm{\vtr x}= \abs{\vtr x_1} + 
 \abs{\vtr x_2}$),
the quantum mechanical two-level system (quantum bit, $d= 3$ and $\norm{\vtr 
 x}= \sqrt{\vtr x\cdot \vtr x}$),
and ``hyperbits'' \cite{Pawlowski:2012PRA} which generalize the quantum bit by 
 allowing for $d>3$ while keeping the Euclidean norm.
The states for a dichotomic norm cone are the maps $(t,\vtr x)\mapsto t+\vtr 
 w\cdot \vtr x$ with $\norm{\vtr w}_*\le 1$, where $\norm{\vtr w}_*\equiv 
 \sup\set{\vtr w\cdot \vtr y | \norm{\vtr y}\le 1}$ is the dual norm.
The order norm is also easy to evaluate, $\norm{(t,\vtr x)}= \abs{t} + 
 \norm{\vtr x}$.

\vspace{.5em}\noindent\textbf{A pathological example.}
We define $V^+= \cone \set{a_1,a_2,\dotsc,a_6}$ where $a_1,\dotsc,a_4$ is a 
 basis of $V$, $a_5= a_1-a_3+a_4$, and $a_6= a_2+a_3-a_4$.
The order unit is chosen to be $e= a_1+a_2+\frac 12(a_3+a_4)$.
This case is pathological in the sense that there is no way to write $e= 
 \sum_{v\in \mathcal A} v$ for any $\mathcal A\subset \set{a_1,\dotsc, a_6}= 
 \exbound V^+$.

\section{Sequential measurements}\label{s8520}
We now discuss sequential measurements for such generalized probabilistic 
 models for which the measurement effects can be squeezed into an AOU space 
 $(V,V^+,e)$.
That is, any measurement can be described by a family of effects $(f_k)\subset 
 V^+_e$ with $\sum_k f_k= e$---this is in analogy to the generalized 
 measurements that occur in quantum mechanics.
Following the discussion in Sec.~\ref{s1003}, we consider the situation that a 
 sequence of two measurements has been performed and the consecutive outcomes 
 $f,g\in V^+_e$ have occurred.
What is the prediction for the probability $\probability_\omega(f\then g)$ for 
 the event $f\then g$, given that the system was in a state $\omega\in \mathcal 
 S$?

This probability will clearly depend on the actual implementation of the first 
 measurement and this implementation is readily summarized by a map $\phi\colon 
 V\rightarrow V$, so that $\probability_\omega(f\then g)= \omega[\phi(g)]$.
This implies that $\phi$ is positive and for consistency we assume $\phi(e)= 
 f$, i.e., the all-embracing outcome $e$ occurs with unit probability, given 
 that previously the outcome $f$ has occurred.
We also assumed that $\phi$ is linear, so that performing with probability $p$ 
 a measurement with outcome $g$ and with probability $1-p$ a measurement with 
 outcome $h$ obeys $\probability[f\then pg+ (1- p)h)]= p\probability(f\then 
 g) + (1- p)\probability(f\then h)$.
A positive map $\phi$ with $\phi(e)= f$ is called $f$-compatible 
 \cite{Heinosaari:2012}.

In principle, any choice of an $f$-compatible map\footnote{%
In quantum mechanics we would be restricted to completely positive maps, but 
 this subtlety can be ignored for the discussion here.}
may be suitable to describe $f\then g$.
Here we are concerned about the projective measurements which generalize 
 Lüders's rule.
The following notions capture important properties of Lüders's rule.
\begin{definition}
Let $\phi$ be an $f$-compatible map for $f\in V^+_e$, i.e., $\phi(e)= f$ and 
 $\phi(V^+)\subset V^+$.
\begin{itemize}\itemsep-.1em
\item[(i)]
$\phi$ is projective, if $\phi\circ\phi= \phi$.
\item[(ii)]
$\phi$ is neutral, if $\omega\circ \phi= \omega$ for any $\omega\in \mathcal S$ 
 with $\omega(f)= 1$.
\item[(iii)]
$\phi$ is coherent, if $\phi(g)= g$ for any $g\in V^+$ with $g\le f$.
\end{itemize}
\end{definition}

One might be tempted to use $f$-compatible projections for defining a 
 generalization of Lüders's rule.
For an extremal element, $f\in \exbound V^+$, such a map is of the form $\phi= 
 f\omega$, where $\omega\in \mathcal S$ is a state with $\omega(f)= 1$ [the 
 existence of such a state is due to Proposition~\ref{p10788}~(ii)].
In quantum mechanics this already yields uniquely Lüders's rule for rank-one 
 projections.
Furthermore, any family $(f_k)\subset V_e^+$ with $\sum f_k\le e$ and 
 $f_k$-compatible projections $\phi_k$ enjoys perfect repeatability, 
 $\phi_k\circ \phi_\ell= \delta_{k, \ell} \phi_k$, utilizing the Kronecker 
 symbol $\delta_{k, \ell}$.
This holds, since for $k\ne \ell$ and any $h\in V_e^+$ we have $0\le 
 \phi_k\phi_\ell h\le \phi_k \phi_\ell e = \phi_k f_\ell = - 
 \phi_k(e-f_k-f_\ell)\le 0$.

Unfortunately, projectivity does not sufficiently fix the choices for $\phi$.
For example, $\phi= e\omega$ is an $e$-compatible projection, but any 
 subsequent measurement will solely depend on the arbitrary choice of 
 $\omega\in \mathcal S$.
Previously \cite{Mielnik:1969CMP, Araki:1980CMP, Niestegge:2008FPH, 
 Ududec:2011FPH}, filters have been considered as a possible extensions of 
 Lüders's rule to generalized probabilistic models.
A filter is a neutral $f$-compatible projection, but it is only called a filter 
 if there also exists a neutral $f$-compatible projection for $e-f$.
Here, we study a different extension of Lüders's rule, namely the coherent 
 Lüders's rules.
\begin{definition}\label{d29446}
A coherent Lüders's rule (CLR) for $f\in V_e^+$ is a coherent $f$-compatible 
 map.
\end{definition}
\noindent
We occasionally write $f^\sharp$ for a CLR of $f$, although this map is not 
 necessarily uniquely defined by the above condition.

A possible interpretation behind the definition of coherence is that the 
 relation $g\le f$ indicates that the outcome $g$ provides always a finer 
 information than $f$ in the sense that independent of the state $\omega$ of 
 the system, $g$ is always less likely to be triggered than $f$.
Thus getting firstly the course grained information $f$ and then the fine 
 grained information $g$ is assumed not to influence $g$.
Hence $f$ preserves all the ``coherences'' of $g$.
We also refer to Proposition~\ref{p20023}, Proposition~\ref{p10596}, the 
 example of a triple-slit experiment in Sec.~\ref{s26365}, and the Discussion 
 in Sec.~\ref{s12113} for further reasoning in favor of this definition.
In Sec.~\ref{s1894} it is also shown that neutral $f$-compatible projections 
 and coherent $f$-compatible maps are different concepts.

\subsection{Basic properties of coherent Lüders's rules}
There are several equivalent ways of expressing Definition~\ref{d29446}.
\begin{lemma}\label{l17842}
For a positive map $\phi$ and an effect $f\in V_e^+$, the following statements 
are equivalent.
\begin{itemize}\itemsep-.1em
\item[(i)]
$\phi(e)= f$ and $\phi(g)= g$ for all $0\le g\le f$.
\item[(ii)]
$\phi(e)\le f$ and $\phi(g)\ge g$ for all $0\le g\le f$.
\item[(iii)]
$a\le \phi(g)\le f\norm g$ for all $g\in V^+$, whenever $0\le a\le f$ and 
 $a\le g$.
\item[(iv)]
$a\le \phi(g)\le f$ for all $g\in V_e^+$, whenever $0\le a\le f$ and $a\le g$.
\end{itemize}
\end{lemma}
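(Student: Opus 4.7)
The natural plan is to prove the cycle (i)$\Rightarrow$(ii)$\Rightarrow$(i) for the first pair and then (i)$\Rightarrow$(iii)$\Rightarrow$(iv)$\Rightarrow$(i) to close everything; two of the four steps are immediate weakenings, so the actual work is concentrated in three places.

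First I would verify (i)$\Rightarrow$(ii), which is immediate since equalities imply the stated inequalities. For the converse (ii)$\Rightarrow$(i), the key trick is to plug $g=f$ into the hypothesis $\phi(g)\ge g$, which gives $\phi(f)\ge f$; combined with positivity applied to $e-f\ge0$ (so $\phi(e)\ge\phi(f)$) and with $\phi(e)\le f$ this forces $\phi(e)=f=\phi(f)$. Then for general $0\le g\le f$, I would use linearity: $\phi(f-g)\ge f-g$ and $\phi(g)\ge g$ sum to $\phi(f)\ge f$, but $\phi(f)=f$, so both inequalities must be equalities.

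For (i)$\Rightarrow$(iii), the upper bound comes from Proposition~\ref{p10788}~(iii) applied to $g\in V^+$: $g\le\norm g\,e$, so by positivity and linearity $\phi(g)\le\norm g\,\phi(e)=f\norm g$. The lower bound comes from $a\le g$ and positivity: $\phi(g)\ge\phi(a)=a$, where the equality $\phi(a)=a$ is coherence applied to $a\le f$. The implication (iii)$\Rightarrow$(iv) is just the observation that $g\in V^+_e$ means $\norm g\le 1$, so $f\norm g\le f$.

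The final step (iv)$\Rightarrow$(i) is the one that requires a little care. To recover $\phi(e)=f$, I take $g=e\in V^+_e$ and $a=f$ in (iv); the hypothesis gives $f\le\phi(e)\le f$. For coherence, given $0\le g\le f$, I apply (iv) once with the choice $a=g$ on the element $g$ itself, obtaining $g\le\phi(g)\le f$; and once with $a=f-g$ on the element $f-g\in V^+_e$, obtaining $\phi(f-g)\ge f-g$. Combining these with the linearity identity $\phi(g)+\phi(f-g)=\phi(f)=f$ (which follows from $\phi(e)=f$ and positivity of $\phi(e-f)$ forcing $\phi(f)=f$) yields $\phi(g)\le g$, hence $\phi(g)=g$. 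The only mildly subtle point in the whole argument is this last bookkeeping: one must extract $\phi(f)=f$ before the telescoping with $\phi(f-g)$ works, which is why I would establish $\phi(e)=f$ and $\phi(f)=f$ as the very first consequence of (iv).
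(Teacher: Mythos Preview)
Your proof is correct and essentially coincides with the paper's argument. The paper organises the implications as the single cycle (i)$\Rightarrow$(iii)$\Rightarrow$(iv)$\Rightarrow$(ii)$\Rightarrow$(i), whereas you prove (i)$\Leftrightarrow$(ii) separately and then (i)$\Rightarrow$(iii)$\Rightarrow$(iv)$\Rightarrow$(i); but your (iv)$\Rightarrow$(i) step is just the paper's (iv)$\Rightarrow$(ii) followed by (ii)$\Rightarrow$(i) run together, and the individual computations (using $g\le\norm{g}e$ for the upper bound, and the telescoping $\phi(g)+\phi(f-g)=\phi(f)=f$ to upgrade $\phi(g)\ge g$ to equality) are identical. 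One tiny expository point: your parenthetical ``which follows from $\phi(e)=f$ and positivity of $\phi(e-f)$'' only gives $\phi(f)\le f$; the other half $\phi(f)\ge f$ comes from your earlier application of (iv) with $a=g=f$, which you have but do not cite there.
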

\begin{proof}
In order to see that (i) implies (iii), note that $\phi(g)= \phi(g-a)+ a\ge a$.
Furthermore, $f\norm g-\phi(g)\ge 0$ follows immediately when considering
 $\phi(\norm{g}e- g)\ge 0$ and by fact that $\norm g e\ge g$ holds since $e$ 
 is Archimedean.

Obviously (iii) implies (iv), since for $g\in V_e^+$ we have $\norm g\le 1$.

Statement (ii) follows from (iv) by letting $g_\text{(iv)}= e$ (yielding 
 $\phi(e)\le f$) and by choosing $g_\text{(iv)}= g_\text{(ii)}= a$ (yielding 
 $\phi(g_\text{(ii)})\ge g_\text{(ii)}$).

We finally show that (i) follows from (ii).
We first use that $\phi(e-f)\ge 0$ and thus $f\ge \phi(e)\ge \phi(f)\ge f$, 
 i.e., $\phi(e)= f= \phi(f)$.
Then $\phi(g)-g\le \phi(f)-f\equiv 0$, where the inequality follows from 
 $f-g\le \phi(f-g)$, which is due to $0\le f-g\le f$.
But $\phi(g)\le g$ can only be compatible with $\phi(g) \ge g$ when 
 $\phi(g)=g$.
\end{proof}
\noindent
Note, that with statement (iv) of this lemma, we have $\phi(h)=f$ for $f\le h 
 \le e$, by letting $a=f$ and $g=h$.

From a physical perspective, a CLR for $f$ describes exactly such a measurement 
 that does not disturb any other subsequent measurement with outcome $f$.
\begin{proposition}\label{p20023}
Let $\mathcal C\supset (V^+\otimes \mathcal S)$ be some cone of positive maps 
 and let $\phi$ be an $f$-compatible map for $f\in V^+_e$.
Then $\phi$ is coherent if and only if $\phi\circ \psi=\psi$ holds for all 
 $f$-compatible maps $\psi\in \mathcal C$.
\end{proposition}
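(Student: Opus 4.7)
The plan is to tackle the two implications separately. The forward direction needs only the definition of coherence combined with a scaling argument, whereas the backward direction requires constructing, for each candidate $g\in[0,f]$, a specific $f$-compatible map inside $\mathcal C$ whose fixed-point equation pins down $\phi(g)$.

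For the forward direction, I assume $\phi$ is coherent and let $\psi\in\mathcal C$ be any $f$-compatible map. For an effect $x\in V_e^+$, positivity of $\psi$ and $0\le e-x$ give $0\le \psi(x)\le \psi(e)=f$, so the coherence condition applied to $\psi(x)$ yields $\phi(\psi(x))=\psi(x)$. For a general $x\in V^+$ with $\norm x>0$, the element $x/\norm x$ is an effect by Proposition~\ref{p10788}~(iii), and linearity of $\phi$ and $\psi$ reduces this case to the previous one. Writing $V=V^+-V^+$ and invoking linearity then extends the identity to all of $V$.

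For the backward direction, fix $0\le g\le f$ and, for states $\omega_1,\omega_2\in\mathcal S$, introduce
\[
 \psi_{\omega_1,\omega_2}\colon x\mapsto g\,\omega_1(x)+(f-g)\,\omega_2(x).
\]
This map is linear and positive, satisfies $\psi_{\omega_1,\omega_2}(e)=g+(f-g)=f$, and lies in $V^+\otimes \mathcal S\subset\mathcal C$ as a sum of two elementary tensors (since $g,f-g\in V^+$). The assumed identity $\phi\circ\psi_{\omega_1,\omega_2}=\psi_{\omega_1,\omega_2}$ rearranges to
\[
 [\phi(g)-g]\,\omega_1(x)+[\phi(f-g)-(f-g)]\,\omega_2(x)=0
\]
in $V$ for every $x$. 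First I take $\omega_1=\omega_2=\omega$ and evaluate at $x=e$, where $\omega(e)=1$; this immediately gives $\phi(f)=f$. Substituting $\phi(f-g)=f-\phi(g)$ collapses the general identity to
\[
 [\phi(g)-g]\,[\omega_1(x)-\omega_2(x)]=0,
\]
and picking any two distinct states together with an $x$ where they disagree forces $\phi(g)=g$, establishing coherence.

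The main obstacle is that last step, which requires at least two states in $\mathcal S$ that differ at some $x$. Any AOU space of dimension $\ge 2$ provides such a pair, so I would handle the degenerate case $V=\reals$ by direct inspection: there every $f$-compatible map is $x\mapsto fx$ and the only $f$-compatible $\psi\in V^+\otimes \mathcal S$ is $\psi(x)=fx$, so the condition $\phi\circ\psi=\psi$ collapses to $f^2=f$, which exactly matches the coherence requirement $\phi(g)=g$ for $0\le g\le f$. A minor housekeeping check is that $g\,\omega_1+(f-g)\,\omega_2$ genuinely lies in $V^+\otimes \mathcal S$, which is automatic from the closure of this cone under addition.
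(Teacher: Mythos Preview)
Your proof is correct and follows essentially the same route as the paper's: the forward direction uses that $\psi(h)\le f$ for effects $h$ and then invokes coherence, while the backward direction tests $\phi$ against the rank-two maps $g\,\omega_1+(f-g)\,\omega_2\in\mathcal C$, first extracting $\phi(f)=f$ and then $\phi(g)=g$ from the resulting linear relation. Your version is slightly more explicit about extending $\phi\circ\psi=\psi$ from effects to all of $V$ and about the degenerate case $\dim V=1$, which the paper handles only implicitly with the phrase ``assuming $\sigma\ne\omega$''.
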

\begin{proof}
If $\psi$ is $f$-compatible, then $\psi(h)\le \psi(e)= f$ for any $h\in V^+_e$.
It follows that $\phi\circ \psi=\psi$ if $\phi$ is a CLR.
For the converse we consider $\psi=(f-g)\omega + g\sigma\in \mathcal C$ with 
 $0\le g\le f$ and $\omega,\sigma\in \mathcal S$.
This map is clearly $f$-compatible and we define $\Delta\equiv 
 \phi\circ\psi-\psi= [\phi(f)-f]\omega + [\phi(g)-g](\sigma-\omega)$.
From $\Delta(e)=0$ we obtain $\phi(f)=f$ and assuming $\sigma\ne \omega$, also 
 $\phi(g)=g$ must hold.
Hence $\phi$ is coherent.
\end{proof}

A CLR in particular obeys repeatability and compatibility.
\begin{proposition}\label{p10596}
Let $f^\sharp$ and $g^\sharp$ be two CLRs for $f,g\in V_e^+$, respectively.
We have:
\begin{itemize}\itemsep-.1em
\item[(i)] $f^\sharp$ is projective.
\item[(ii)] If $g\le f$ then $f^\sharp g= g^\sharp f$.
\item[(iii)]
If $g\le f$ and $g^\sharp$ is unique for $g$, then $f^\sharp g^\sharp= g^\sharp 
 f^\sharp$.
\end{itemize}
\end{proposition}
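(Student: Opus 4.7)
My plan is to prove the three statements in order, so that (ii) can be used in the proof of (iii).

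For (i), I would start from the observation that Lemma~\ref{l17842}(iv), applied with $a=0$, gives $0\le f^\sharp(x)\le f$ for every effect $x\in V_e^+$. Coherence of $f^\sharp$ then yields $f^\sharp\bigl(f^\sharp(x)\bigr)=f^\sharp(x)$. Since every $x\in V^+$ is a non-negative multiple of an effect (because $x\le\|x\|e$ by Proposition~\ref{p10788}(iii)) and since $V=V^+-V^+$, linearity extends the identity $f^\sharp\circ f^\sharp=f^\sharp$ to all of $V$.

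For (ii), the chain $0\le g\le f$ together with coherence of $f^\sharp$ yields $f^\sharp(g)=g$ directly. The identity $g^\sharp(f)=g$ is not quite as immediate, since $f$ lies \emph{above} $g$, not below it; however, the remark following Lemma~\ref{l17842} --- that a coherent $g$-compatible map sends every $h$ with $g\le h\le e$ to $g$, obtained from part (iv) of the lemma with $a=g$ --- applied to $g^\sharp$ with $h=f$ gives exactly $g^\sharp(f)=g$. Combining the two identities gives $f^\sharp(g)=g=g^\sharp(f)$.

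For (iii), I would set $\psi_1=f^\sharp\circ g^\sharp$ and $\psi_2=g^\sharp\circ f^\sharp$ and argue that each is a CLR for $g$; uniqueness of $g^\sharp$ then forces $\psi_1=\psi_2=g^\sharp$. Both maps are positive as compositions of positive maps, and by (ii) one has $\psi_1(e)=f^\sharp(g)=g$ and $\psi_2(e)=g^\sharp(f)=g$, so both are $g$-compatible. Coherence of each $\psi_i$ follows by taking any $h$ with $0\le h\le g\le f$: the inner map fixes $h$ by its coherence, and the outer map fixes the result by its coherence, yielding $\psi_i(h)=h$.

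The main obstacle, as I see it, is the asymmetry between effects lying below $f$ (handled directly by the definition of coherence) and effects lying above $g$ (handled only via the specific consequence of Lemma~\ref{l17842}(iv) noted just after the lemma). Once one recognises this ``ceiling'' behaviour of a CLR alongside its ``floor'' behaviour, all three claims reduce to short algebraic manipulations, with the uniqueness hypothesis on $g^\sharp$ invoked only at the very last step of (iii).
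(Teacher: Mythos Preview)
Your proof is correct and follows essentially the same route as the paper's, invoking Lemma~\ref{l17842}(iv) and the remark after it in exactly the same places. One minor sharpening worth noting in (iii): the paper observes that $f^\sharp\circ g^\sharp=g^\sharp$ holds \emph{unconditionally}, since for any $h\in V_e^+$ one has $g^\sharp(h)\le g\le f$ and coherence of $f^\sharp$ then fixes $g^\sharp(h)$; uniqueness of $g^\sharp$ is therefore needed only to identify $g^\sharp\circ f^\sharp$ with $g^\sharp$, not for the other composition.
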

\begin{proof}
We implicitly use Lemma~\ref{l17842}~(iv).
Then $f^\sharp h\le f$ for any $h\in V_e^+$ and hence $f^\sharp (f^\sharp h)= 
 f^\sharp h$.
If $g\le f$ then immediately $f^\sharp g= g= g^\sharp f$ (cf.\ also the remark 
 after Lemma~\ref{l17842}).
If the CLR for $g$ is unique then $f^\sharp g^\sharp= g^\sharp f^\sharp$, since 
 $f^\sharp g^\sharp= g^\sharp$ and on the other hand $g^\sharp f^\sharp$ is a 
 valid CLR for $g$.
\end{proof}

We mention that the property of being neutral or coherent is robust under 
 sections.
A section \cite{Kleinmann:2013PRL} is a positive unital injection $\tau$ from 
 $(W,W^+,e')$ to $(V,V^+,e)$, such that there exists a positive surjection 
 $\tau'\colon V\rightarrow W$ with $\tau'\circ\tau= \id_W$.
If $\phi$ is a neutral/coherent $\tau(f)$-compatible map, then $\tau'\circ 
 \phi\circ \tau$ is a neutral/coherent $f$-compatible map.
An important instance of this observation is the embedding of the classical 
 events into quantum events via $\tau \colon \vtr v\mapsto \diag(\vtr v)$.
In contrast, general $\tau(f)$-compatible projections do not always induce 
 $f$-compatible projections.

\subsection{Conditions on elements with a coherent Lüders's rule}\label{s20245}
Not all $f\in V_e^+$ admit a CLR as we see next.
But the CLR for $e$ is the identity mapping, while for $0$ it is the zero 
 mapping.
On the other hand, if $f$ is extremal, $f\in \exbound V^+$, then any 
 $f$-compatible projection is a CLR.
For the general situation we have
\begin{proposition}\label{p14167}
For $f\in V_e^+$ consider the following statements.
\begin{itemize}\itemsep-.1em
\item[(i)]
$f$ admits a CLR.
\item[(ii)]
$g\le f\norm g$ for all $0\le g\le f$.
\item[(iii)]
$0\le g\le f$ and $g\le e-f$ only for $g= 0$.
\end{itemize}

Then (i) implies (ii) and (ii) implies (iii).
\end{proposition}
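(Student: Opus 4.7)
The plan is to prove each implication separately, using Lemma~\ref{l17842} for the first and an algebraic manipulation based on the order norm for the second. I expect the main technical care to be in chaining inequalities correctly; there is no real obstacle beyond bookkeeping.

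For (i)$\Rightarrow$(ii), let $\phi$ be a CLR for $f$ and take any $g$ with $0\le g\le f$. By coherence, $\phi(g)=g$. I then apply Lemma~\ref{l17842}(iii) with the choice $a=g$: the premises $0\le a\le f$ and $a\le g$ are trivially satisfied, and the conclusion reads $a\le \phi(g)\le f\norm g$, so in particular $g=\phi(g)\le f\norm g$. This is exactly (ii).

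For (ii)$\Rightarrow$(iii), I would assume $0\le g\le f$ and $g\le e-f$, and aim to conclude $\norm g=0$, which forces $g=0$ by Proposition~\ref{p10788}(iii). The hypothesis on $g\le f$ together with (ii) gives $g\le f\norm g$. The second hypothesis rewrites as $f\le e-g$; multiplying by $\norm g\ge0$ and substituting yields
\begin{equation*}
 g\le f\norm g\le (e-g)\norm g,
\end{equation*}
which rearranges to $(1+\norm g)\,g\le \norm g\, e$. Since $g\in V^+$, this is an inequality of the form $0\le g\le \frac{\norm g}{1+\norm g}\,e$, so by the definition of the order norm we get $\norm g\le \frac{\norm g}{1+\norm g}$, hence $\norm g^2\le 0$. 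Therefore $\norm g=0$ and $g=0$, establishing (iii).

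The only subtle point in the plan is reading the bound on $\norm g$ off the last inequality; this uses the Archimedean property of $e$ implicitly (so that $0\le g\le re$ really implies $\norm g\le r$), exactly as invoked in the proof of Lemma~\ref{l17842}. Beyond this, everything is purely order-theoretic and linear.
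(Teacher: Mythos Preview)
Your proof is correct and follows essentially the same route as the paper: for (i)$\Rightarrow$(ii) both invoke Lemma~\ref{l17842}(iii) to get $g=f^\sharp g\le f\norm g$, and for (ii)$\Rightarrow$(iii) both chain $g\le f\norm g\le (e-g)\norm g$ to obtain $g\le \frac{\norm g}{1+\norm g}e$ and conclude $\norm g=0$. One small remark: the step ``$0\le g\le re$ implies $\norm g\le r$'' is immediate from the definition of the order norm and does not need the Archimedean property; where Archimedeanity actually enters is in the final step $\norm g=0\Rightarrow g=0$ (equivalently, that the order norm is a genuine norm rather than a seminorm), which the paper also notes explicitly.
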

\begin{proof}
Statement (ii) is a direct consequence of Lemma~\ref{l17842}~(iii), $g= 
 f^\sharp g\le f \norm g$.
For the second part we consider $0\le g\le f\le e-g$.
Then $0\le g\le f\norm g \le \norm g (e-g)$ and therefore $e \norm g/(\norm 
 g+1) \ge g$, which contradicts $\norm g\equiv \inf\set{ r\in \reals^+ | r e 
 \ge g}$ unless $\norm g= 0$.
By the Archimedean property the assertion follows.
\end{proof}

From part (ii) of this proposition it immediately follows that if $f= \sum_k 
 p_k f_k$ with $(f_k)\subset \exbound V^+$ and real numbers $p_k> 0$ then 
 already $f_k\le f$.
But one cannot conclude that there exists a decomposition of $f$ into extremal 
 elements with unit weights, cf.\ the pathological example form 
 Sec.~\ref{s9680} with $f=e$.
This pathological space also provides an example where (iii) does not imply 
 (ii).
The counterexample works with $f= e-a_1-a_2\equiv (a_3+a_4)/2$, which obeys 
 (iii).
But $f-p a_3\ge 0$ only for $p\le \frac 12$ in contradiction to (ii).
At the moment it remains unclear whether (ii) implies (i), even though it does 
 not seem plausible to hold.
On the other hand, for quantum mechanics, already statement (iii) can only hold 
 if $F$ is a projection since $0\le \sqrt F(\openone -F)\sqrt F \equiv F -F^2 
 \le F$ and $0\le (\openone-F)^2\equiv \openone- 2F +F^2$, i.e., $F-F^2 \le 
 \openone -F$.
By assumption we then have $F-F^2= 0$ and hence $F$ is a projection.

\subsection{Neutral maps}\label{s1894}
Neutral $f$-compatible projections have been suggested previously 
 \cite{Mielnik:1969CMP, Araki:1980CMP, Niestegge:2008FPH, Ududec:2011FPH} as an 
 extension of Lüders's rule to generalized probabilistic models.
For the moment we call them neutral Lüders's rules (NLRs).
If $f$ and $e-f$ allow an NLR, then an NLR for $f$ is a filter.
We observe:

\vspace{.5em}\noindent
\textit{1.\ Some elements do not have an NLR, despite being extremal.}
Consider the dichotomic norm cone (cf.\ Sec.~\ref{s9680}) with $\norm{\vtr x}= 
 \sum \abs{\vtr x_i}$ and $d\ge 2$.
In this case, there exists no neutral map $\phi$ for any of the extremal 
 elements $f\in \exbound V^+$ since states with $\omega(f)= 1$ are not unique 
 but on the other hand $\phi= f\omega$ must hold for $\phi$ to be an 
 $f$-compatible projection.

\vspace{.5em}\noindent
\textit{2.\ Some elements with an NLR do not have a CLR.}
An example occurs in the pathological example from Sec.~\ref{s9680} for the 
 effect $f= e-a_1-a_2$.
As demonstrated at the end of Sec.~\ref{s20245} this element does not have a 
 CLR.
But the only state with $\omega(f)= 1$ is $\omega(a_k)= (0,0,1,1,0,0)_k$ and 
 hence $f\omega$ is an NLR for $f$.
One can also construct an NLR for the complement $f_\neg= e-f$, showing that 
 $f\omega$ is a filter.
The NLR for $f_\neg$ is not unique, but a possible representative is given by 
 $a_1 \omega_1 + a_2 \omega_2$ with $\omega_i(a_k)= 
 \delta_{i,k}+\delta_{i+4,k}$.

\section{Applications}
\subsection{Quantum mechanics}\label{s15325}
In quantum mechanics, $F\in V^+_e$ admits a CLR if and only if it is a 
 projection.
We have shown necessity in Sec.~\ref{s20245} and in order to show sufficiency 
 we assume that $F$ is a projection and that $F^\sharp(X)= FXF$.
It remains to show that $G= FGF$ for any $0\le G \le F$.
Although this is an easy and well-known relation, we shall spend a few lines to 
 show it:
We write $F_\neg= \openone-F$.
Then $0\le F_\neg(F-G)F_\neg= - F_\neg G F_\neg \le 0$ and thus $F_\neg G= 
F_\neg GF$.
But $0\le (F+\lambda F_\neg)G(F+\lambda F_\neg)= FGF + \lambda(F_\neg GF+ FG 
 F_\neg)$ for all $\lambda\in \reals$ implies $F_\neg G F= -F GF_\neg$, i.e., 
 $G= FGF$.

The rule $F^\sharp\colon X\mapsto FXF$ is unique as we demonstrate by 
 construction.
Assume $G\in V_e^+$.
Then $0\le F (\openone -G) F = F-FGF$ implying $F^\sharp(FGF)= FGF$ and $0\le 
 F^\sharp[F_\neg(\openone - G)F_\neg] = - F^\sharp(F_\neg G F_\neg)$ which 
 yields $F^\sharp(F_\neg G F_\neg)= 0$.
With $G'_\lambda\equiv (F+ \lambda F_\neg) G (F+ \lambda F_\neg)\ge 0$ we have
\begin{equation}
 F^\sharp(G'_\lambda)= FGF + \lambda A\ge 0
 \text{, where } A= F^\sharp(F_\neg GF+ FG F_\neg),
\end{equation}
 for all $\lambda\in \reals$.
This implies again $A= 0$ and hence $F^\sharp(G)\equiv F^\sharp(G'_1)= FGF$.

We mention that we did not assume that $F^\sharp$ is completely positive but 
 nevertheless obtained the intended quantum mechanical Lüders's rule.

\subsection{Dichotomic norm cones}\label{s16520}
As a second example, we consider the dichotomic norm cones of Sec.~\ref{s9680}.
For this AOU spaces the set of effects admitting a CLR is given by 
 $\set{0,e}\cup \exbound V^+$, cf.\ Appendix~\ref{a25858}.
This shows that dichotomic norm cones form a very convenient toy model for 
 which basically the assumption of an $f$-compatible projection alone leads to 
 a reasonable Lüders's rule.
Put into an explicit form, any extremal element $f\in \exbound V^+$ is of the 
 form $f= (\tfrac 12,\vtr f)$ with $\norm{\vtr f}= \tfrac 12$ and any 
 corresponding CLR reads thus
\begin{equation}\label{e26441}
 f^\sharp\colon (t,\vtr x)\mapsto (t+\vtr f'\cdot \vtr x) f
 \text{, with } \vtr f'\cdot \vtr f= \tfrac12
 \text{, and } \norm{\vtr f'}_*= 1.
\end{equation}
Since the set of CLRs for a given effect $f$ is convex, it follows that if 
 $\norm{\vtr x}$ is a $p$-norm with $1<p<\infty$ then the CLR is unique.
This is due to the fact that then the dual norm $\norm{\vtr x}_*$ is the 
 $[p/(p-1)]$-norm, the unit-sphere of which only has convex subsets with a 
 single vector.
On the other hand, for the Manhattan Norm, $p= 1$, and e.g.\ $\vtr f= 
 (\frac12,0,\dotsc,0)$ the available choices are any of $\vtr f'= 
 (1,\xi_2,\dotsc,\xi_d)$ with arbitrary coefficients $-1\le \xi_k\le 1$.

As an example we compute the effective ``observable`` for an sequential 
 measurement of two dichotomic observables $A= a-a_\neg$ and $B= b-b_\neg$ with 
 $a_\neg= e-a$ and $b_\neg= e-b$.
That is, with the notation $A\sharp= a^\sharp-a_\neg^\sharp$, we aim at 
$A\sharp B$.
For simplicity we assume that in $a^\sharp$ and $a_\neg^\sharp$ we have $\vtr 
 a_\neg'= -\vtr a'$, which surely holds when both CLRs are unique.
Writing $b= (\beta,\vtr b)$ yields
\begin{equation}\label{e22968}
 A\sharp B= (2\beta -1)A+ 2(\vtr a'\cdot \vtr b) e.
\end{equation}
If $\beta= \frac12$, e.g., because $b$ is extremal, then the expected value 
 $\exv{A\sharp B}_\omega\equiv \omega(A\sharp B)$ does not depend on the 
 prepared state $\omega$.
For the case of the Euclidean norm, $\norm{\vtr x}= \sqrt{\vtr x\cdot \vtr x}$, 
 and $B\sharp$ defined analogously to $A\sharp$, we find in addition $A\sharp 
 B= B\sharp A$.
Both aspects have been observed already for qubits \cite{Fritz:2010JMP} which 
 corresponds to the dichotomic norm cone with $d= 3$ and the Euclidean norm.

~

Dichotomic norm cones can have strong non-quantum behavior.
As an example we consider the simplest correlation term $\exv{\text{LG}'}$,
\begin{equation}\label{e5277}
\exv{\text{LG}'}_\omega= \omega(A\sharp B+ B- A).
\end{equation}
For so-called macro-realistic systems (which are in our language CLR 
 measurements on the classical events) the constraint $\exv{\text{LG}'}\le 1$ 
 is valid \cite{Leggett:1985PRL}, while for quantum mechanics the bound 
 $\exv{\text{LG}'}\le \frac32$ is in order \cite{Budroni:2013PRL}.
Note, that the quantum mechanical bound only holds for CLRs 
 \cite{Budroni:2014PRL}.
For dichotomic norm cones and assuming again that always $\vtr a'= -\vtr 
 a_\neg'$ we obtain the sharp bound (cf.\ Appendix~\ref{a26355})
\begin{equation}\label{e32092}
 \exv{\text{LG}'}\le 2\norm{\vtr b- \vtr a} + 2\vtr a'\cdot \vtr b
  \text{, where } \norm{\vtr b}=\tfrac 12.
\end{equation}
In the case of the Manhattan norm, $\norm{\vtr x}= \sum \abs{\vtr x_k}$, and 
 $d= 2$ we find that the r.h.s.\ of this inequality can easily reach $3$ by 
 choosing $\vtr a= (\tfrac 12,0)$, $\vtr b= (0,\tfrac 12)$, and $\vtr a'= 
 (1,1)$.

~

We finally mention that Spekkens's toy model \cite{Spekkens:2007PRA} implements 
 a CLR.
In this model, there are six extremal elements $\exbound V^+= \set{a_{\pm1}, 
 a_{\pm2}, a_{\pm3}}$ given by $a_i= (\tfrac12,\vtr a_{(i)})$, with
\begin{equation}
 \vtr a_{(\pm1)}= (\pm \tfrac12,0,0) \text{, }
 \vtr a_{(\pm2)}= (0,\pm \tfrac12,0) \text{, and }
 \vtr a_{(\pm3)}= (0,0,\pm \tfrac12).
\end{equation}
These elements form observables $A_k= a_{+k}-a_{-k}$ and hence $e= 
 a_{+k}+a_{-k}\equiv (1,\vtr 0)$.
This way Spekkens's toy model is the dichotomic norm cone with $d= 3$ and the 
 Manhattan norm.
Spekkens also introduced a state update rule for this model, which is such that 
\begin{equation}
 \probability(a_i\then a_j)= \probability(a_i)\begin{cases}
  1 & i= j, \\
  0 & i= -j, \\
 \tfrac12 & \text{else.}
 \end{cases}
\end{equation}
This update rule corresponds to the CLR defined in Eq.~\eqref{e26441} with the 
 choice $\vtr a_{(i)}'= 2\vtr a_{(i)}$.

\subsection{The triple-slit experiment}\label{s26365}
While the double-slit experiment is a prime example of a quantum effect, within 
 quantum mechanics there are no higher order interference terms, as has been 
 found by Sorkin \cite{Sorkin:1994MPLA}.
This absence was also verified in experiments \cite{Sinha:2010SCI}.
Recently, the triple-slit experiment has been investigated as instance of 
 sequential measurements in the context of generalized probabilistic models 
 \cite{Ududec:2011FPH} and the (im)possibility of triple-slit correlations in 
 such models was discussed e.g.\ in Refs.~\onlinecite{Niestegge:2012AMP,
 Dakic:2014NJP}.

In an $n$-slit experiment with slits labeled by $\mathcal N= \set{1, 2, \dotsc, 
 n}$, detecting that the particle passed through any of the slits 
 $\alpha\subset \mathcal N$ plays the role of the first measurement, described 
 by a map $\phi_\alpha$.
The measurement of the interference pattern on the screen is hence the second 
 measurement.
Each possible combination of open slits $\alpha$ may have its particular 
 interference pattern as long as the integrated intensity is independent of 
 whether the slits are opened individually or jointly, so that $\phi_\alpha(e)= 
 \sum_{k\in \alpha}\phi_{\set k}(e)$.
Clearly, the total intensity is bounded by unity, so that $\phi_{\mathcal 
 N}(e)\in V^+_e$.

We discuss now briefly the assumption that $\phi_\alpha$ is coherent for the 
 effect $\phi_\alpha(e)$ and hence is a CLR.
Assume that the probability for an effect $g$ depends only on the integrated 
 intensity that arrives through the slits $\alpha$, i.e., $\phi_\alpha(e)\equiv 
 \sum_{k\in \alpha}\phi_{\set k}(e)\ge g$.
In this case, the coherence assumption $\phi_\alpha(g)= g$ assures that putting 
 the simultaneously opened slits $\alpha$ in front of a measurement with 
 outcome $g$ does not change that outcome.

We recursively define (in general non-positive) maps $\eta_\alpha$ via
\begin{equation}\label{e3498}
 \phi_\alpha= \sum_{\beta\subset \alpha}\eta_\beta.
\end{equation}
Then those maps $\eta_\alpha$ are exactly the interference terms 
 $I_{\abs\alpha}(\alpha)$ as defined by Sorkin \cite{Sorkin:1994MPLA}, adapted 
 to the language chosen here.
In Eq.~\eqref{e3498} we try to write the map on the l.h.s.\ in terms of the 
 lower order correlations.
The difference between the actual map $\phi_\alpha$ and this lower order sum is 
 then defined as $\eta_\alpha$.

In a quantum mechanical $n$-slit experiment the slits are described by 
 projections $\Pi_k$ obeying $\sum\Pi_k\le \openone$.
We let $\Pi_\alpha= \sum_{k\in \alpha}\Pi_k$ and therefore
\begin{equation}
 \phi_\alpha\colon X\mapsto \Pi_\alpha X \Pi_\alpha
 \equiv \sum_{\beta \subset\alpha\colon \abs{\beta}\le 2} \eta_\beta(X),
\end{equation}
 that is, $\eta_\beta= 0$ whenever $\abs\beta>2$.
That is, in quantum mechanics all interference terms above the second order 
 vanish.
We mention that in general this absence only occurs if the quantum instrument 
 follows Lüders's rule, as a counterexample may serve $\phi_\alpha\colon 
 X\mapsto \sqrt{A_\alpha}X\sqrt{A_\alpha}$ with $A_\alpha= \sum_k A_k$ and 
 $A_1= \openone/2$, $A_2= \proj0/2$, $A_3= \proj1/2$.
Such measurements, however, may fail to have a proper physical interpretation 
 as a triple-slit experiment, since the operators $A_k$ may act non-locally.

For generalized probabilistic models, though, we can easily have higher order 
 correlations:
Consider the AOU space with $V^+= \cone \set{a_1,\dotsc,a_5}$, where 
 $a_1,\dotsc,a_4$ is a basis of $V$, $a_5= a_1+a_2+a_3-a_4$, and $e= 
 a_1+a_2+a_3\equiv a_4+a_5$.
We choose $\phi_{\alpha}(e)= \sum_{k\in\alpha} a_k$ for 
 $\alpha\subset\set{1,2,3}\equiv \mathcal N$.
A brief calculation yields for $\alpha\subsetneq \mathcal N$,
\begin{equation}\label{e29484}
 \phi_\alpha= \sum_{k\in \alpha}a_k\omega^\alpha_k
\end{equation}
 where $\omega^\alpha_k$ are arbitrary choices of states with 
 $\omega^\alpha_k(a_k)= 1$.
Since those states are not unique, we can e.g.\ use this freedom to achieve 
 commutativity, $\phi_\alpha\circ \phi_\beta= \phi_\beta\circ \phi_\alpha$, or 
 to get vanishing double-slit correlations, $\eta_{\set{k, \ell}}= 0$.
In contrast, the map for the triple-slit is the identity mapping, 
 $\phi_{\mathcal N}= e^\sharp\equiv \id$.
From Eq.~\eqref{e29484} we see that $a_4\notin \eta_\alpha(V)$ except for 
 $\alpha= \mathcal N$, i.e., nonvanishing triple-slit correlations occur.

\section{Discussion}\label{s12113}
An important property of quantum systems is that the measurement necessarily 
 changes the state of the system---or in a Heisenberg type-of-picture that the 
 description of a measurement depends on previous measurements that have been 
 performed.
How this change occurs in general depends on the actual implementation of the 
 measurement.
In quantum mechanics, however, the change induced by projective measurements 
 according to Lüders is the least disturbing and least biased implementation of 
 a projective measurement.
We re-derived this rule in quantum mechanics (cf.\ Sec.~\ref{s15325}) solely 
 from the coherence assumption stated in Definition~\ref{d29446}.
This definition of coherent Lüders rules (CLRs) can be applied to a wide class 
 of hypothetical non-quantum models, namely the generalized probabilistic 
 models which can described by means of Archimedean ordered vector spaces.

We showed in Proposition~\ref{p20023} that CLRs are exactly those maps which do 
 not disturb any subsequent and possibly more ``noisy'' implementation of the 
 same measurement.
We also showed that familiar results of repeatability and compatibility hold 
 (Proposition~\ref{p10596}, cf.\ also Refs.~\onlinecite{Mielnik:1969CMP, 
 Niestegge:2008FPH}).

In quantum mechanics, Lüders's rule is directly linked to and singles out the 
 projection operators, which in turn play a key role e.g.\ in spectral theory.
(Celebrated results for a generalized spectral theory \cite{Alfsen:1976MAMS, 
 Araki:1980CMP, Alfsen:2003} are, however, linked to neutral maps.)
We find that for extremal measurement effects (a generalization of rank-one 
 projections in quantum mechanics) an CLR always exists, while necessary 
 conditions for existence have been given in Proposition~\ref{p14167}.
Also, in certain pathological cases, the CLR is not unique.
This ambiguity might be unsatisfactory, but for quantum mechanics and classical 
 mechanics the conditions of being a CLR are sufficient to achieve uniqueness, 
 so that adding any further condition is of a rather speculative kind.

Finally we demonstrated in Sec.~\ref{s16520} that CLRs occurred already earlier 
 in Spekkens's toy model \cite{Spekkens:2007PRA} and that this toy model can 
 now be seen as an instance of a much wider class of models with a natural 
 notion of sequential measurements.
For those models it is e.g.\ straightforward to compute the upper limit for the 
 Leggett-Garg inequality in Eq.~\eqref{e32092}.
As a last instance we discussed in Sec~\ref{s26365} the triple-slit experiment, 
 finding that generalized probabilistic models with a CLR can easily have 
 substantial triple-slit correlations, while it is an important prediction of 
 quantum mechanics that those are absent.

\begin{acknowledgments}
The proof of Lemma~\ref{l17842} was simplified by one of the anonymous 
 referees.
For discussions, hints, and amendments I am particularly indebted to
J.\ Emerson,
O.\ Gühne,
R.\ Hübener,
J.-Å.\ Larsson,
V.B.\ Scholz,
M.\ Ziman, and
Z.\ Zimborás.
I thank the Centro de Ciencias de Benasque, where part of this work has been 
done, for its hospitality during the workshop on quantum information
2013.
I acknowledge support from
the BMBF (Chist-Era Project QUASAR),
the Brazilian agency CAPES, through the program Science without Borders,
the DFG,
the EU (Marie Curie CIG 293993/ENFOQI), and
the FQXi Fund (Silicon Valley Community Foundation).
\end{acknowledgments}

\appendix
\section{Elements with a coherent Lüders's rule in dichotomic norm 
cones}\label{a25858}
In a dichotomic norm cone (cf.\ Sec.~\ref{s9680}), the set of effects admitting 
 a CLR is given by $\set{0,e}\cup \exbound V^+$, as stated in 
 Sec.~\ref{s16520}.
For $f= (t,\vtr f)\in V^+_e$ we have $\norm f= 1$ if and only if $t= 1- 
 \norm{\vtr f}$ and $\norm{\vtr f}\le \frac12$.
Assume now that $f$ admits a CLR, but $0\ne f\ne e$.
By virtue of Proposition~\ref{p14167}~(ii) it follows that
 $\norm f= 1$ and $\norm{\vtr f}= \frac12$.
The first statement is obtained by choosing $g= f$ and the second statement by 
 the choice $0\le g= (1-2\norm{\vtr f})e= f-(\norm{\vtr f},\vtr f)\le f$.
If now $a\in \exbound V^+$ and $p>0$, such that $p a\le f$, then also $a\le f$.
This reads $\frac12- \frac12\ge \norm{\vtr f- \vtr a}$ and therefore $f= a$.

\section{Obtaining Eq.~\eqref{e32092}}\label{a26355}
Under the result $A\sharp B= (2\beta-1)A+2(\vtr a'\cdot \vtr b)e$ 
 [Eq.~\eqref{e22968}] we bound the correlation term $\exv{\text{LG}'}_\omega= 
 \omega(A\sharp B+ B- A)$ [Eq.~\eqref{e5277}] for dichotomic norm cones, 
 assuming $A= a- a_\neg= (0,2\vtr a)$, and $B= b- b_\neg= (2\beta-1,2\vtr b)$.
Writing $\omega= (1,\vtr w)$, this yields for $\vtr b\ne \vtr 0$,
\begin{equation}\begin{split}
\tfrac12 \exv{\text{LG}'}_\omega&=
 \vtr a'\cdot \vtr b+ \vtr w\cdot(\vtr b-\vtr a) +
  (2\beta-1)(\vtr w\cdot \vtr a+\tfrac 12)\\
 &\le \norm{\vtr b} [\vtr a'\cdot \underline{\vtr b}+
  \norm{\underline{\vtr b}- 2\vtr a}-1]+ \tfrac12\\
\end{split}\end{equation}
 with $\underline{\vtr b}= \vtr b/\norm{\vtr b}$.
The inequality is due to $\beta\le 1-\norm{\vtr b}$,
 $\norm{\vtr w}_*\le 1$, and $\vtr w\cdot\vtr a\ge - \tfrac 12$.
The bound is sharp, if $\beta= 1-\norm{\vtr b}$ and $\vtr w\cdot 
 (\underline{\vtr b}-2\vtr a)= \norm{\underline{\vtr b}-2\vtr a}$.
Using the conditions from Eq.~\eqref{e26441}, we have $\norm{\underline{\vtr 
 b}- 2\vtr a}\ge -\vtr a'\cdot (\underline{\vtr b}- 2\vtr a)= 1- \vtr a'\cdot 
 \underline{\vtr b}$ and hence the term in square brackets is never negative.
This makes the choice $\norm{\vtr b}= \tfrac12$ optimal and we arrive at the 
 sharp bound of Eq.~\eqref{e32092}.

\bibliography{the,xxx}

\end{document}